\theoremstyle{definition}
\newtheorem{lemma}{Lemma}
\newtheorem{proposition}{Proposition}
\newtheorem{theorem}{Theorem}
\newcommand{\introduce}[1]{\textit{#1}}
\newcommand{\coloration}{\ensuremath{\mathcal{C}}}
\newcommand{\dirD}{{\bf D}}
\newcommand{\dirN}{{\bf N}}
\newcommand{\dirE}{{\bf E}}
\newcommand{\dirS}{{\bf S}}
\newcommand{\dirW}{{\bf W}}
\newcommand{\dirh}{{\bf h}}
\newcommand{\dirv}{{\bf v}}
\DeclareMathOperator{\cnot}{CNOT}
\DeclareMathOperator{\HGP}{HGP}
\newcommand{\sectitle}[1]{\textit{#1 ---}}
\newenvironment{algo}[1]{
  \algorithm[H]
    \caption{#1}
    \DontPrintSemicolon
    \SetAlgoCaptionLayout{left}
    \SetAlgoHangIndent{0pt}
    \SetKwInOut{Input}{input}
    \SetKwInOut{Output}{output} 
}{
  \endalgorithm
}
\begin{document}

\title{Constant-overhead quantum error correction with thin planar connectivity}

\author{Maxime A. Tremblay}
\affiliation{
    Institut quantique
    \&
    Département de physique,
    Université de Sherbrooke,
    Sherbrooke, Qc, Canada, J1K 2R1
}

\author{Nicolas Delfosse}
\author{Michael E. Beverland}
\affiliation{
    Microsoft Quantum
    \&
    Microsoft Research,
    Redmond, WA 98052, USA
}

\begin{abstract}
Quantum LDPC codes may provide a path to build low-overhead fault-tolerant quantum computers.
However, as general LDPC codes lack geometric constraints, naïve layouts couple many distant qubits with crossing connections which could be hard to build in hardware and could result in performance-degrading crosstalk. 
We propose a 2D layout for quantum LDPC codes by decomposing their Tanner graphs into a small number of planar layers. 
Each layer contains long-range connections which do not cross.
For any CSS code with a degree-$\delta$ Tanner graph, we design stabilizer measurement circuits with depth at most $(2\delta +2)$ using at most $\lceil \delta/2 \rceil$ layers.
We observe a circuit-noise threshold of 0.28\% for a positive-rate code family using 49 physical qubits per logical qubit. 
For a physical error rate of $10^{-4}$, this family reaches a logical error rate of $10^{-15}$ using fourteen times fewer physical qubits than the surface code.
\end{abstract}

\maketitle

\begin{figure}[t]
  \includegraphics[width=.45\textwidth]{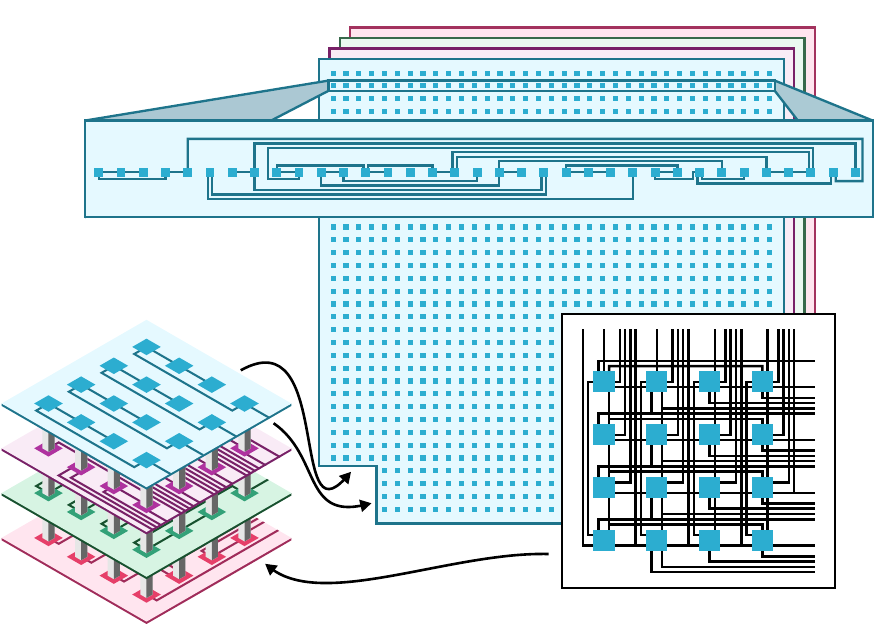}
    \caption{
        A crossing-free planar layout for hypergraph product codes with four layers.
        Stabilizer generators are measured using circuits built from single-qubit operations and CNOT gates
        between qubits connected by an edge.
        The top two layers have edges connecting qubits in the same row as shown in the zoom, 
        while the lower two layers have edges connecting qubits in the same column. 
        Each layer is planar, such that no pair of edges cross.
        For comparison,
        the lower right box shows the non-planar connections passing through
        the lower-left corner before the decomposition, which exhibits many crossings.
    }
    \label{fig::layout}
\end{figure}

Quantum error correction (QEC) is typically implemented by measuring Pauli operators called {\em stabilizer generators} of a QEC code to detect faults.
In quantum low density parity check (LDPC) codes, the stabilizer generators have low weight, making them easier to implement than general codes.
A further appealing property possessed by some quantum LDPC codes is \introduce{positive-rate}, 
allowing them to achieve arbitrarily low logical error rates with a constant ratio of tens of physical qubits per logical qubit.
For large computations, this can correspond to more than an order of magnitude lower qubit overhead than alternative codes with vanishing rate such as the surface code.
However, these positive-rate LDPC codes have non-local stabilizer generators~\cite{bravyi_tradeoffs_2010, baspin2021quantifying} making them somewhat daunting to implement in hardware.
In this work,
we seek a practical implementation of positive-rate quantum LDPC codes which performs well in a full circuit-level noise analysis.

To clarify our discussion, we define the {\em connectivity graph} of a quantum circuit, with vertices corresponding to qubits and edges connecting vertices corresponding qubits which undergo entangling operations in the circuit.
We can further define a {\em layout} as a specification of the physical locations of the connectivity graph's qubits and connections.
In this paper we are interested in circuits which measure the stabilizer generators of a QEC code. 
Given a family of codes, we focus on constant-depth stabilizer measurement circuits to avoid a build-up of errors which could spoil any fault-tolerance threshold.

A natural first question is if the non-local stabilizer generators of
positive-rate quantum LDPC codes can be measured using a circuit with
local connectivity in a 2D qubit layout.
In recent work \cite{delfosse2021bounds}, we show that although a single
non-local stabilizer can be measured in constant depth,
the full set of stabilizer generators cannot be collectively measured in constant depth 
without the ratio of logical to physical qubits vanishing. 

Given that a 2D qubit layout with local connectivity is excluded for 
positive-rate quantum LDPC codes, 
we consider quantum hardware equipped with some set of long-range connections.
With unrestricted connections, one could simply lay out the code qubits in a 2D grid 
along with an ancilla for each stabilizer generator that has connections to the code qubits in that stabilizer's support. 
However, this typically results in 
an unbounded number of crossing connections~\cite{ajtai1982crossing, leighton1983complexity}; see Fig.~\ref{fig::layout}.
In many hardware platforms, in addition to being challenging to implement,
crossing connections can spread error through crosstalk \cite{sarovar_detecting_2020, debnath_demonstration_2016, neill_blueprint_2018,ash-saki_analysis_2020}.
This raises the following question.

\medskip
\textit{
  Can we construct high-threshold constant-depth stabilizer measurement circuits for positive-rate
quantum LDPC codes in a 2D qubit layout without crossings in the two-qubit gate connectivity?
}
\medskip

In this work, we provide a positive answer to this question through the design of an
\introduce{$l$-planar layout}, consisting of qubits placed in a 2D grid, with edges separated into $l=O(1)$ planar layers, with no crossings in each layer;
see Figure~\ref{fig::layout}.
Furthermore,
we provide a circuit construction consistent with this layered planar architecture to measure the
stabilizer generators of any CSS-type~\cite{calderbank_good_1996, steane_simple_1996} quantum LDPC code in constant depth.

\begin{theorem} \label{theo:hgp_layered_layout}
  	Let $Q$ be a CSS code such that each stabilizer generator has weight at most $\delta$ and each qubit is involved in at most $\delta$ stabilizer generators.
    Then, one can implement the measurement of all the stabilizer generators of $Q$ with a circuit with depth $2\delta + 2$ using a $\left\lceil \delta/2 \right\rceil$-planar layout.
\end{theorem}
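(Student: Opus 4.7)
The plan splits naturally into two parts: constructing a syndrome extraction circuit of depth $2\delta+2$, and then decomposing its connectivity graph into $\lceil \delta/2 \rceil$ planar layers realized on a 2D grid.

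\textbf{Step 1: Circuit of depth $2\delta+2$.} The Tanner graph of $Q$ is bipartite between stabilizer generators and data qubits with maximum degree $\delta$ on both sides. By K\"onig's edge-coloring theorem for bipartite graphs, its edges decompose into $\delta$ matchings $M_1, \ldots, M_\delta$. I would use one ancilla per stabilizer generator and measure the X-type stabilizers first, then the Z-type stabilizers, sequentially (so that the CNOT directions in one phase do not propagate harmful errors into the other). Each phase consists of one ancilla preparation round, $\delta$ CNOT rounds (round $i$ executing the gates in matching $M_i$), and one measurement round. Parallelizing the boundary rounds of the two phases (preparing Z-ancillas while measuring X-ancillas) yields total depth $2\delta+2$.

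\textbf{Step 2: Planar layering.} I would group the matchings into $\lceil \delta/2 \rceil$ pairs, assigning $M_{2\ell-1}$ and $M_{2\ell}$ to layer $\ell$. The union $M_{2\ell-1}\cup M_{2\ell}$ is a bipartite subgraph in which every vertex has degree at most $2$, so combinatorially it is a disjoint union of simple paths and even-length cycles; abstractly such a graph is planar. Since the X-phase and Z-phase are time-separated, the same physical layer carrying the CNOTs of $M_{2\ell-1}, M_{2\ell}$ during the X-phase can be reused (with CNOT directions reversed) during the Z-phase without creating conflicts within a single time step.

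\textbf{Step 3: Grid realization.} To complete the proof I would exhibit a 2D grid placement of the data and ancilla qubits together with a straight-line routing rule under which each of the $\lceil \delta/2 \rceil$ paired graphs is drawn without crossings. A natural attempt is to choose, for each pair, a distinguished direction (a row or a column) so that each of its paths/cycles threads monotonically along that direction; the ancilla positions are then dictated by the endpoints of its path, mirroring the horizontal/vertical construction shown for HGP codes in Fig.~\ref{fig::layout}.

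\textbf{Main obstacle.} The delicate point is Step 3: a single grid placement must make all $\lceil \delta/2 \rceil$ paired graphs simultaneously planar. The matchings $M_i$ are not independent objects: they share both data qubits and ancilla positions, so choosing a placement that serves one pair tends to constrain the others. I expect the proof to rely on picking the edge coloring (and the pairing of colors) together with the placement, so that the paths in each layer inherit a compatible monotone orientation --- essentially generalizing the row/column decomposition of Fig.~\ref{fig::layout} to the abstract CSS setting by using the matching structure to choose vertex coordinates one layer at a time.
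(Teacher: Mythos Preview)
Your Steps~1 and~2 are correct and coincide with the paper's argument. Step~1 is Proposition~\ref{prop:css_codes_layout} applied separately to $T_X$ and $T_Z$, with the same two-step overlap. Step~2 is an explicit proof of the thickness bound the paper simply cites~\cite{halton_thickness_1991, mutzel_thickness_1998}: pairing colour classes of a proper $\delta$-edge-colouring yields $\lceil \delta/2 \rceil$ subgraphs of maximum degree~$2$, hence abstractly planar.

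The gap is exactly where you locate your ``main obstacle''. You attempt to engineer a grid placement together with a straight-line or monotone routing that makes all layers crossing-free simultaneously, and you are right that this is not obvious --- for a general CSS code there is no product structure to exploit, so the row/column scheme of Fig.~\ref{fig::layout} has no analogue, and your proposed strategy of choosing coordinates ``one layer at a time'' does not lead anywhere since the vertex positions must be shared across layers. The paper bypasses this entirely with a second black-box result: Pach and Wenger~\cite{pach_embedding_1998} show that \emph{any} planar graph can be drawn without crossings with its vertices at \emph{arbitrary} prescribed locations in the plane (using polyline edges, not straight segments). Thus you may fix the qubit positions on the grid once, however you like; each of your $\lceil \delta/2 \rceil$ layers is abstractly planar by Step~2, and Pach--Wenger furnishes a crossing-free drawing of that layer on the chosen placement. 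No coordination between layers is required beyond sharing vertex positions, and no straight-line constraint should be imposed. This is the missing ingredient in your Step~3.
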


Below, after proving this theorem, we refine these results by specializing to a family of quantum codes known as hypergraph product (HGP) codes~\cite{tillich_quantum_2014} which are constructed from a pair of input graphs.
In this case, we find a low-depth stabilizer measurement circuit which reduces to the standard circuit in the case of surface codes\footnote{Surface codes are HGP codes formed when the input graphs are the Tanner graphs of a pair of repetition codes} \cite{fowler_surface_2012}. 
Moreover, we prove in Lemma~\ref{lemma:hgp_degree_bound} that the circuit depth can be reduced to $\delta + 2$ when the vertices of the input graphs of the HGP code admit a balanced ordering.

Lastly, 
we numerically explore the performance of these circuits for a family of HGP codes using the decoding routine of Grospellier and Krishna~\cite{grospellier_numerical_2019}, but replacing their idealized noise model by circuit noise.
Using our layered planar connectivity,
we obtain a circuit-noise threshold of $2.8(2)\times 10^{-3}$, providing strong 
evidence that these codes can offer a significant advantage over surface codes which have a comparable threshold.

\sectitle{Quantum LDPC codes}
All the quantum codes we consider in this work are CSS codes \cite{calderbank_good_1996, steane_simple_1996}.
Recall that a CSS code with length $n$ is defined by
a set of commuting \introduce{stabilizer generators}
$s_{X, 1}, \dots, s_{X, r_X}, s_{Z, 1}, \dots, s_{Z,r_Z}$
with $s_{X, i} \in \{I, X\}^n$ and $s_{Z, j} \in \{I, Z\}^n$.
The $X$ \introduce{Tanner graph} is the bipartite graph $T_X = (V, E)$
whose vertex set is $V = V_q \cup V_X$ where $V_q = \{q_1, \dots, q_n \}$ is the qubit set and $V_X = \qty{s_{X, 1}, \dots, s_{X, r_X}}$.
There is an edge between $q_i$ and $s_{X, j}$ 
iff $s_{X, j}$ acts non-trivially on qubit $q_i$.
The $Z$ Tanner graph $T_Z$ is defined similarly from the $Z$ stabilizer generators, and the overall Tanner graph is their union $T= T_X \cup T_Z$.
The code is a quantum LDPC code if the Tanner graph has bounded degree.
Quantum error correction works by measuring all the stabilizer generators and 
applying a correction based on the outcomes observed.
Our goal is to design practical stabilizer measurement circuits for quantum LDPC codes.

We first show that quantum circuits with a low-degree connectivity graph can be implemented with a layered planar connectivity using a small number of layers. 
This result applies to any quantum circuit (not just stabilizer measurement circuits) and in particular to all low-depth quantum circuits.

\begin{proposition}
  \label{prop::layered_connectivity}
	Let $C$ be a circuit made with single-qubit and two-qubit operations whose connectivity graph has degree at most $\delta$. 
	Then, $C$ can be implemented with a $\lceil \delta / 2 \rceil$-planar layout.
\end{proposition}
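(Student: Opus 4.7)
The plan is to reduce the geometric problem to a purely combinatorial edge-decomposition: I claim it suffices to partition the edges of the connectivity graph $G$ of $C$ (with maximum degree $\delta$) into $\lceil \delta/2 \rceil$ subgraphs, each of maximum degree $2$, and then assign each such subgraph to its own layer. Indeed, a graph of maximum degree $2$ is a disjoint union of paths and cycles and is therefore planar; moreover any planar graph admits a crossing-free drawing compatible with any prescribed placement of its vertices in the $2$D grid, since one may route edges as appropriate curves rather than straight segments. Consequently, once the decomposition is in hand, the layered planar layout (with qubit positions shared across layers and edges drawn independently within each layer) is immediate.

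The decomposition splits into two cases depending on the parity of $\delta$. When $\delta$ is odd, Vizing's theorem supplies a proper edge-coloring of $G$ with $\delta+1$ colors; since $\delta+1$ is even, I would arbitrarily pair these $\delta+1$ matchings into $(\delta+1)/2 = \lceil \delta/2 \rceil$ groups, each being a union of two matchings and hence of maximum degree $2$. When $\delta$ is even, the analogous pairing would yield $\delta/2 + 1$ groups, one too many, so I would instead embed $G$ as a spanning subgraph of a $\delta$-regular multigraph $\bar G$---for instance by taking two disjoint copies of $G$ and adding $\delta - \deg_G(v)$ parallel edges between the two copies of each low-degree vertex $v$---and then apply Petersen's 2-factorization theorem to decompose $\bar G$ into $\delta/2$ edge-disjoint 2-factors. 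Restricting each 2-factor to the original edge set of $G$ yields $\delta/2 = \lceil \delta/2 \rceil$ subgraphs of maximum degree $2$, as required.

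The main obstacle is precisely the even-$\delta$ case of the decomposition: shaving off the extra layer that the naive Vizing-and-pair argument would produce is exactly what forces the Petersen-based detour through a regularized auxiliary graph, and one has to check that restriction to $E(G)$ preserves the maximum-degree-$2$ property (which is automatic). All remaining steps---the Vizing pairing when $\delta$ is odd, the passage from maximum-degree-$2$ subgraphs to crossing-free planar drawings, and the assembly of the independent layers over a common qubit placement---are routine applications of well-known facts.
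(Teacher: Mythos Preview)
Your argument is correct. The paper's own proof is a two-line appeal to the literature: it invokes the graph-thickness bound of Halton (and Mutzel~\emph{et~al.}) stating that any graph of maximum degree~$\delta$ can be edge-partitioned into at most $\lceil\delta/2\rceil$ planar subgraphs, and then cites Pach--Wenger to fix the vertex positions across layers. You instead re-derive that thickness bound from scratch, and in a slightly sharper form: rather than arbitrary planar layers you produce layers of maximum degree~$2$, via Vizing-and-pair when $\delta$ is odd and via Petersen $2$-factorization of a regularized double cover when $\delta$ is even. Both proofs share the same skeleton (edge-partition into planar pieces, then embed each piece with common vertex placement), and both need the Pach--Wenger fact that a planar graph can be drawn without crossings at prescribed vertex locations. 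What your route buys is self-containment and an explicit, algorithmically usable layer structure (disjoint paths and cycles); what the paper's route buys is brevity.
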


\begin{proof}
  This follows directly from the fact that for any graph with
  degree at most $\delta$,
  the smallest edge partition such that each subgraph is planar
  involves at most $\lceil \delta / 2 \rceil$ subgraphs~\cite{halton_thickness_1991, mutzel_thickness_1998}.
  Furthermore,
  since any planar graph can be drawn with arbitrary vertex location~\cite{pach_embedding_1998},
  we can fix the position of each qubit across layers.
\end{proof}

We now introduce the \introduce{coloration circuit} 
associated with an edge coloration of a Tanner graph
that can be used for any CSS code.
An \introduce{edge coloration} of a graph is a coloration of the edges
such that incident edges support distinct colors.
We will often consider a {\em minimum edge coloration}, that is an edge coloration
with a minimum number of colors.

\vspace{0.3cm}
\begin{algo}{Coloration circuit}
  \Input{A minimum edge coloration $\coloration_X$ of $T_X$.}
  \Output{The measurement outcome of all $X$ stabilizer generators.}
  Prepare an ancilla in $\ket +$ for each generator $s_{X, i}$.\;
  \For{color $c \in \coloration_X$}{
    Simultaneously apply all gates $\cnot_{i \to j}$ from the $i$th ancilla 
    to the $j$th data qubit supported on an edge $\{i, j\}$ with color~$c$.\;
  }
  Measure each ancilla in the $X$ basis.\;
\end{algo}

\begin{proposition} 
    \label{prop:css_codes_layout} 
    Let $Q$ be a CSS code with $X$ Tanner graph $T_X$.
    Then, the coloration circuit measures all the $X$ stabilizer generators of $Q$ in depth $\deg(T_X) + 2$.
\end{proposition}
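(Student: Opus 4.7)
The plan is to verify the two claims of the proposition independently: first that the circuit is a correct implementation of the joint measurement of $s_{X,1},\dots,s_{X,r_X}$, and second that its depth is exactly $\deg(T_X) + 2$. I expect the correctness part to follow from a standard Steane-style argument, so the bulk of the work is really in checking that simultaneous execution within each color class is well-defined and in counting the number of color classes.

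For correctness, I would fix a single stabilizer $s_{X,i}$ with ancilla prepared in $\ket{+}$, and argue directly from the circuit identity that applying the product $\prod_{j \in \mathrm{supp}(s_{X,i})} \cnot_{a_i \to q_j}$ maps $\ket{+}_{a_i}\ket{\psi}$ to a state whose $X$-basis measurement of $a_i$ returns the eigenvalue of $s_{X,i}$ on $\ket{\psi}$. Because the ancillas are disjoint and each stabilizer only touches its own ancilla, the multi-stabilizer measurement factorizes over $i$, so the simultaneous preparation, CNOT pattern, and measurement together implement the joint measurement of all $X$-stabilizer generators. The order in which the CNOTs attached to a single ancilla are applied is irrelevant since $\cnot_{a_i \to q_j}$ for different $j$ commute, so the coloration only needs to be valid in the scheduling sense, not in the outcome sense.

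For the depth bound, I would invoke the defining property of an edge coloration: all edges assigned a given color $c$ form a matching, so the CNOTs activated in iteration $c$ involve pairwise disjoint pairs of qubits and can be executed in a single time step. The for-loop therefore contributes one layer per color, plus the single preparation layer at the start and the single $X$-basis measurement layer at the end. It then remains to identify the number of colors in a minimum edge coloration of $T_X$. Since $T_X$ is bipartite (qubits on one side, $X$-check vertices on the other), K\"onig's edge coloring theorem gives that the chromatic index equals the maximum degree $\deg(T_X)$. Summing yields $\deg(T_X) + 2$.

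The main potential obstacle is simply making sure nothing is being glossed over in the bipartite chromatic-index step: one must observe that the coloration circuit is defined in terms of a \emph{minimum} coloration and that, because $T_X$ is bipartite, this minimum equals $\deg(T_X)$ rather than the Vizing upper bound $\deg(T_X)+1$. Once this is pinned down, everything else is bookkeeping, and the proof collapses to the two-line summary: K\"onig plus standard CNOT-based syndrome extraction.
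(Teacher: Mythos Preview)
Your proposal is correct and follows essentially the same route as the paper: correctness by observing that all the CNOTs commute and can be regrouped into single-generator measurement subcircuits, and the depth bound from the bipartiteness of $T_X$ giving chromatic index equal to $\deg(T_X)$ (the paper cites Alon's algorithmic result rather than K\"onig, but the content is the same). One terminological nit: the ancilla-per-check extraction you describe is not really ``Steane-style'' (which uses encoded ancilla blocks), but your actual argument is the standard one and matches the paper.
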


\begin{proof}
The CNOTs applied in step 3 can be applied simultaneously because they correspond to edges with the same color, guaranteeing they have disjoint support.
We see the circuit measures the $X$ generators by rearranging the CNOTs, which commute, to form a sequence of single-generator measurement circuits. 
Each such circuit prepares an ancilla in $\ket{+}$, applies CNOTs from the ancilla to the generator's support, then measures the ancilla in the $X$ basis.
The depth of the coloration circuit is $\deg(T_X) + 2$ because the Tanner graph, which is bipartite, admits an edge coloration with $\deg(T_X)$ colors~\cite{alon_simple_2003}.
\end{proof}

Swapping the roles of $X$ and $Z$ provides a $Z$ stabilizer measurement circuit with depth $\deg(T_Z) + 2$. We are now equipped to prove Th.~\ref{theo:hgp_layered_layout}.

\begin{proof} [Proof of Th.~\ref{theo:hgp_layered_layout}]
By Proposition~\ref{prop:css_codes_layout}, a circuit extracting both $X$ and $Z$ syndromes with depth $\deg(T_X) + \deg(T_Z) + 2$ is formed by running the $X$ then the $Z$ circuit with two overlapping time steps.
The connectivity graph of the cardinal circuit has degree $\deg(T)$. 
Therefore Prop.~\ref{prop::layered_connectivity} proves the existence of a $\lceil \deg(T)/2 \rceil$-planar layout.
\end{proof}

For some codes, the depth can be further reduced by interleaving $X$ and $Z$ stabilizer measurements.
However, the design of an interleaved $X/Z$ stabilizer measurement circuit is non-trivial 
because the CNOT gates involved in $X$ and $Z$ measurements do not commute.
Now we specialize to HGP codes for which we provide an interleaved $X/Z$ stabilizer measurement circuit.

\begin{figure}[t]
  \includegraphics[scale=0.95]{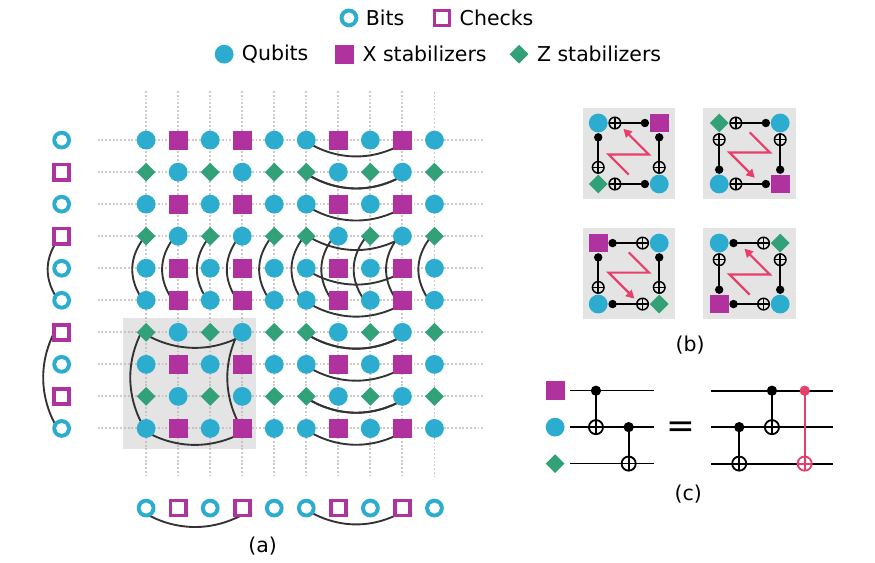}
    \caption{
      \textbf{(a)}
      The Tanner graph of a HGP code is the Cartesian product  
      of two bipartite input graphs.
      Here qubits are displayed according to their label, which does not correspond to their physical location in the 2D layout.
      \textbf{(b)}
      The four possible configurations in which an $X$ 
      and a $Z$ stabilizer can overlap.
      Red arrows show the cardinal circuit ordering.
      \textbf{(c)}
      The commutation of a pair of CNOT gates.
    }
    \label{fig::hgp_code}
\end{figure}

\sectitle{Hypergraph product codes}
The HGP code $\HGP(G_1, G_2)$~\cite{tillich_quantum_2014} is defined from the Cartesian product $G_1 \times G_2$ of two bipartite graphs~(Fig.~\ref{fig::hgp_code}(a)).
For $m \in \{1,2\}$, 
let $V_m = B_m \cup C_m$ be the vertex set of $G_m$ 
and $E_m$ be its edge set. 
Edges of $E_m$ connect a vertex of $B_m$ with a vertex of $C_m$.
We assume that each vertex of $G_m$ is given by a label $i=1, \dots, |V_m|$.
Each pair $(i, j) \in B_1 \times B_2 \cup C_1 \times C_2$
represents a data qubit of the HGP code
while each pair $(i, j)$ in $B_1 \times C_2$ (resp. $C_1 \times B_2$) 
corresponds to an $X$ (resp. $Z$) stabilizer generator.
The stabilizer generator with label $(i, j)$, denoted $S_{(i,j)}$ is supported on the qubits with label $(i', j)$ 
where $\{i', i\} \in E_1$ and $(i, j')$ with $\{j, j'\} \in E_2$.
To avoid confusion with the coordinates introduced later 
which specify the physical locations of qubits in a 2D layout, 
we refer to the pair $(i, j)$ as a \introduce{data qubit label} 
or \introduce{stabilizer generator label}.

We associate a \introduce{direction} $\dirN, \dirS, \dirE, \dirW$  with each edge.
An edge between stabilizer vertex $(i, j)$ and
qubit vertex $(i, j')$ with $j' = j + \ell \pmod{|V_1|}$
has direction $\dirN$ if $0 < \ell \leq |V_1|/2$ and direction $\dirS$ otherwise.
We define the directions $\dirE$ and $\dirW$ similarly for edges between stabilizer and qubit vertices $(i, j)$ and $(i', j)$.
For each direction $\dirD \in \{\dirN, \dirS, \dirE, \dirW\}$, 
we consider the subgraph $T_\dirD$ of the Tanner graph $T$ induced by the edges with direction~$\dirD$. 

The following circuit interleaves $X$ and $Z$ measurements and can achieve a depth lower than the coloration circuit.

\begin{algo}{Cardinal circuit}
  \label{algo::cardinal_circuit}
  \Input{A minimum edge coloration $\coloration_\dirD$ of $T_\dirD$.}
  \Output{The outcome of the measurement of all the $X$ and $Z$ stabilizer generators.}
  Prepare an ancilla in $\ket +$ for each $X$ stabilizer generator and an ancilla in $\ket 0$ for each $X$ stabilizer generator.\;
  \For{direction $\dirD \in \qty{\dirE, \dirN, \dirS, \dirW}$}{
    \For{color $c \in \coloration_\dirD$}{
      Simultaneously apply all $\cnot$ gates supported on an edge of $T_\dirD$ with color $c$.\;
    }
  }
  Measure each $X$ and $Z$ ancilla in the $X$ and $Z$ basis respectively.\;
\end{algo}
\medskip

Note that the $\cnot$ is either aligned or anti-aligned with an edge depending on the type of the stabilizer.
The control qubit of the $\cnot$ is the ancilla for $X$ stabilizers, and it is the data qubit for $Z$ stabilizers.
%The following statement asserts the cardinal circuit performs the correct measurement in the specified depth.

\begin{proposition}  \label{prop:cardinal_circuit_correctness}
	Let $Q$ be a hypergraph product code with Tanner graph $T$.
	Then, the cardinal circuit implements the measurement of all the stabilizer generators of $Q$ in depth $\deg(T_\dirN) + \deg(T_\dirS) + \deg(T_\dirE) + \deg(T_\dirW) + 2$.
\end{proposition}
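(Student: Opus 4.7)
The plan is to handle depth and correctness separately. The depth bound is a direct consequence of the algorithm's structure: for each direction $\dirD \in \{\dirE, \dirN, \dirS, \dirW\}$, the subgraph $T_\dirD$ inherits bipartiteness from $T$, so by K\"onig's theorem it admits an edge coloration with $\deg(T_\dirD)$ colors, yielding $\deg(T_\dirD)$ layers of simultaneous $\cnot$ gates. Summing across the four directions and adding two time steps for ancilla preparation and measurement gives the claimed depth.

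For correctness, my approach is a Heisenberg-picture argument that tracks the conjugation of each $X_{a_X}$, and symmetrically each $Z_{a_Z}$, through the circuit. Since each X-ancilla is prepared in $\ket{+}$, it suffices to show that the circuit conjugates $X_{a_X}$ to $X_{a_X}\cdot S_X$, modulo identities on other ancillas, so that measuring $X_{a_X}$ reveals the eigenvalue of $S_X$. The only relevant non-commutation is between a $\cnot_{a_X \to q}$ and a $\cnot_{q \to a_Z}$ on a shared data qubit $q$: if the X-CNOT is applied first, then the $X_q$ factor already accumulated on the tracked operator is conjugated by the subsequent Z-CNOT into $X_q X_{a_Z}$, introducing a spurious $X_{a_Z}$. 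After the whole circuit the tracked operator has the form $X_{a_X}\cdot S_X \cdot \prod_{a_Z} X_{a_Z}^{k}$, where $k = k(a_X, a_Z)$ counts the qubits in $\mathrm{supp}(S_X)\cap\mathrm{supp}(S_Z)$ on which the X-CNOT fires before the corresponding Z-CNOT. The measurement is correct exactly when every such $k$ is even.

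The key step is then to prove $k \in \{0, 2\}$ for every overlapping pair. By the Cartesian product structure of the HGP Tanner graph, two generators of opposite type overlap on either $0$ or exactly $2$ data qubits. When the overlap consists of $q_1 = (i_B, j_B)$ and $q_2 = (i_C, j_C)$, the four CNOTs involved have a rigid direction pattern: on $q_1$ the X-CNOT is vertical ($\dirN$ or $\dirS$) while the Z-CNOT is horizontal ($\dirE$ or $\dirW$), and on $q_2$ these roles are swapped. Furthermore, the direction convention forces the X-CNOT direction on $q_1$ to be the reverse of the Z-CNOT direction on $q_2$, and symmetrically for the other pair.

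The main obstacle is verifying that the ordering $\dirE, \dirN, \dirS, \dirW$ indeed delivers the desired parity, which is exactly what Fig.~\ref{fig::hgp_code}(b) visualizes. A short case analysis over the four possible direction assignments shows that either the X-CNOT precedes the Z-CNOT on both shared qubits (giving $k = 2$) or the Z-CNOT precedes the X-CNOT on both (giving $k = 0$); the cardinal ordering is engineered precisely so that the two qubits in any overlap pair stay synchronized, by placing the vertical steps $\dirN$ and $\dirS$ between the horizontal ones $\dirE$ and $\dirW$. A mirror-image argument tracking $Z_{a_Z}$ under the same ordering handles the Z-type stabilizers and closes the proof.
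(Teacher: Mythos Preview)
Your proof is correct. It differs from the paper's in presentation rather than in the essential idea. The paper argues by induction on the number of stabilizer generators, showing that the cardinal circuit $C_{m+1}$ on generators $s_1,\dots,s_{m+1}$ has the same action as the concatenation $C_m\,C(s_{m+1})$ of the cardinal circuit on the first $m$ generators with a single-generator measurement. The inductive step works by reordering the $\cnot$s of the concatenated circuit into cardinal order; each swap of a non-commuting pair produces an extra ancilla-to-ancilla $\cnot$ via the identity in Fig.~\ref{fig::hgp_code}(c), and these extras cancel in pairs because each overlapping $X$/$Z$ pair requires exactly $0$ or $2$ swaps under the $\dirE,\dirN,\dirS,\dirW$ ordering. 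Your Heisenberg-picture argument bypasses both the induction and the explicit circuit-rewriting identity: you conjugate $X_{a_X}$ through the full circuit and count the spurious $X_{a_Z}$ factors directly, invoking the same $0$-or-$2$ parity fact. Both proofs therefore bottom out in the identical structural observation---HGP overlaps have size two, with a direction pattern that the sandwiching of $\{\dirN,\dirS\}$ between $\{\dirE,\dirW\}$ synchronizes. Your route is arguably more transparent about what ``correct measurement'' means at the level of observables, while the paper's route makes explicit that the cardinal circuit is literally a reordering of a sequential single-generator measurement circuit, which is a slightly stronger statement.
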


\begin{proof}
	It is easy to check that the depth of the cardinal circuit is $\deg(T_\dirN) + \deg(T_\dirS) + \deg(T_\dirE) + \deg(T_\dirW) + 2$. This is because the bipartite graph $T_\dirD$ admits an edge coloration with $\deg(T_\dirD)$ colors~\cite{alon_simple_2003}.
	
 We now prove by induction that the cardinal circuit measures the stabilizer generators.    
  Denote by $C_m$ the circuit obtained by applying the cardinal circuit construction to the subset of stabilizer generators $s_1,\dots, s_m$.
  Clearly, for $m=1$, the circuit $C(s_1)$ measures the stabilizer generator $s_1$.
  We will show that the concatenation of $C_m$ and $C(s_{m+1})$, that we denote $C_m C(s_{m+1})$, has the same action as $C_{m+1}$.
    
    If all the $\cnot$ gates of $C(s_{m+1})$ commute with all the $\cnot$s of $C_m$, we can simply reorder the $\cnot$s of the circuit $C_m C(s_{m+1})$ to obtain the cardinal circuit $C_{m+1}$.
    Assume now that some $\cnot$ gates of $C(s_{m+1})$ do not commute with the gates of $C_m$.
	Again, we would like to put the $\cnot$ of $C_mC(s_{m+1})$ in the cardinal order, but swapping these $\cnot$s produces extra $\cnot$s as one can see in Fig.~\ref{fig::hgp_code}(d).
	We will show that these extra $\cnot$s cancel out.
    
  If $s_{m+1}$ is a $Z$ stabilizer, the corresponding $\cnot$s only fail to commute with $\cnot$s associated with the previous $X$ stabilizer generators $s_i$ that overlap with $s_{m+1}$.    
  Swapping these $\cnot$s produces an extra gate $\cnot_{s_i \rightarrow s_{m+1}}$ as shown in Fig.~\ref{fig::hgp_code}(c).
  By the hypergraph product construction, if $s_{m+1}$ overlaps with an $X$ stabilizer generator $s_i$, their overlap contains exactly two qubits, in one of the four possible configurations of Fig.~\ref{fig::hgp_code}(b).
 	To bring the $\cnot$s into cardinal order, we need to perform either 0 or 2 swaps between the $\cnot$s of $s_{m+1}$ and $s_i$.
	This results in two consecutive $\cnot$ gates $\cnot(s_i, s_{m+1})$ which cancel out.
  Thus, reordering the CNOT gates in $C_m C(s_{m+1})$ to produce $C_{m+1}$ preserves the action of the circuit.
  A similar argument applies when $s_{m+1}$ is an $X$ stabilizer generator.
  Applying this inductively starting with a single stabilizer generator, we reach the cardinally ordered circuit proving that it has the same action as the initial stabilizer measurement circuit.    
\end{proof}

From Prop.~\ref{prop:cardinal_circuit_correctness} we see that the cardinal circuit only has a lower depth than the coloration circuit if the Tanner subgraphs $T_D$ have sufficiently low degree.
To ensure this,
we must order the vertices $(i, j)$ of the Tanner graph to distribute the edges more equally between the four directions around each vertex.
This motivates the notion of balanced ordering that we introduce now.

A \introduce{balanced ordering} for a graph $G = (V, E)$ is a labeling of the vertices
by integers $i=1, \dots, |V|$
such that for each vertex $i$ we have $\delta_+(i) - \delta_-(i) = (\delta(i) \mod 2)$ 
where $\delta_+(i)$ is the number of vertices connected to $i$
of the form $(i+\ell) \pmod{|V|}$ with $0 < \ell \leq |V|/2$
and $\delta_-(i) = \delta(i) - \delta_+(i)$.
When the graph is not clear from the context, we will use the notation $\delta_\pm(G, i) = \delta_{\pm}(i)$.
%Lastly, note that for $\delta_{\pm}(G) = \max_{i} \delta_{\pm}(i)$,
%we have $\delta_+(G) + \delta_-(G) = \delta(G)$.
The following lemma is proven in the Supplemental Material.

\begin{lemma} \label{lemma:hgp_degree_bound}
	Let $Q = \HGP(G_1, G_2)$ be a hypergraph product code.
	Then, we have
	$$
	\deg(T) \leq
	\sum_{\dirD = \dirN, \dirS, \dirE, \dirW} \deg(T_\dirD)
	\leq 2 \deg(T) \cdot
	$$
	Moreover, if $G_1$ and $G_2$ have only even degree vertices and admit a balanced ordering, then the lower bound is tight.
\end{lemma}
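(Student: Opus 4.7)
The plan is to exploit the fact that the Tanner graph of $\HGP(G_1,G_2)$ is the Cartesian product $G_1\times G_2$, so that every vertex $(i,j)$, whether a data qubit or a stabilizer, has degree $\deg_T((i,j)) = \deg_{G_1}(i) + \deg_{G_2}(j)$, and in particular $\deg(T) = \deg(G_1) + \deg(G_2)$. Moreover the four directional subgraphs partition $E(T)$: the edges in $T_\dirN\cup T_\dirS$ are the ``vertical'' ones (endpoints share the first coordinate) and those in $T_\dirE\cup T_\dirW$ are the ``horizontal'' ones, so that for every vertex $v=(i,j)$,
\[
\deg_{T_\dirN}(v)+\deg_{T_\dirS}(v) = \deg_{G_2}(j), \qquad \deg_{T_\dirE}(v)+\deg_{T_\dirW}(v) = \deg_{G_1}(i).
\]

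For the lower bound $\deg(T)\le\sum_\dirD \deg(T_\dirD)$ I would pick a vertex $v^*$ attaining $\deg_T(v^*)=\deg(T)$, rewrite this total as $\sum_\dirD \deg_{T_\dirD}(v^*)$ using the partition above, and bound each directional term by $\deg(T_\dirD)$. For the upper bound the same partition forces $\deg(T_\dirN)\le \max_j \deg_{G_2}(j) = \deg(G_2)$ and similarly $\deg(T_\dirS)\le\deg(G_2)$ and $\deg(T_\dirE),\deg(T_\dirW)\le\deg(G_1)$; summing yields $\sum_\dirD \deg(T_\dirD)\le 2\deg(G_1)+2\deg(G_2) = 2\deg(T)$.

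For the tightness statement I would translate the cyclic direction convention into the $\delta_\pm$ language of a balanced ordering. A vertical edge incident to $v=(i,j)$ is assigned to $T_\dirN$ or $T_\dirS$ according to whether the other endpoint's second coordinate is ``ahead'' or ``behind'' $j$ in the cyclic order on $V_2$, so that $\{\deg_{T_\dirN}(v),\deg_{T_\dirS}(v)\} = \{\delta_+(G_2,j),\delta_-(G_2,j)\}$, and analogously $\{\deg_{T_\dirE}(v),\deg_{T_\dirW}(v)\}=\{\delta_+(G_1,i),\delta_-(G_1,i)\}$. When $G_1$ and $G_2$ have only even-degree vertices and admit a balanced ordering, every $\delta_\pm$ equals half the corresponding vertex degree, so each directional degree attains $\deg(G_m)/2$ for the appropriate $m\in\{1,2\}$; the four contributions then sum to $\deg(G_1)+\deg(G_2)=\deg(T)$, matching the lower bound.

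The main obstacle lies in the bookkeeping for the tightness half: one must verify that the cyclic direction convention consistently produces the $\delta_+/\delta_-$ split at \emph{both} endpoints of every edge, in particular handling the degenerate case in which $|V_m|$ is even and a neighbor sits at cyclic distance exactly $|V_m|/2$, where the roles of $\delta_+$ and $\delta_-$ can collide. The even-degree and balanced-ordering hypotheses are precisely what neutralize this endpoint asymmetry by forcing $\delta_+(i)=\delta_-(i)$ uniformly, so that the specific convention used to break the tie becomes irrelevant for the maximum.
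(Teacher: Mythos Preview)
Your proposal is correct and follows essentially the same route as the paper. Both arguments obtain the lower bound from the fact that $\{T_\dirN,T_\dirS,T_\dirE,T_\dirW\}$ partitions the edges of $T$, obtain the upper bound by observing that each $\deg(T_\dirD)$ is bounded by the degree of the corresponding horizontal or vertical factor (the paper packages this via intermediate subgraphs $T_\dirh,T_\dirv$ with $\deg(T)=\deg(T_\dirh)+\deg(T_\dirv)$, but this is only cosmetic), and prove tightness by showing $\deg(T_\dirD)\le\deg(G_m)/2$ under the balanced even-degree hypothesis and then squeezing against the lower bound. The only stylistic difference is that the paper splits the tightness case analysis into ``stabilizer vertex'' versus ``qubit vertex'' (yielding $\delta_+$ versus $\delta_-$ respectively), whereas you compress this into the unordered pair $\{\delta_+,\delta_-\}$; since the hypothesis forces $\delta_+=\delta_-$, both formulations land in the same place. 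Your explicit flagging of the cyclic-distance-$|V_m|/2$ tie is a point the paper's proof passes over silently.
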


For HGP codes based on even, balanced graphs, this lemma, combined with Prop.~\ref{prop:cardinal_circuit_correctness}, proves that the depth of the cardinal circuit is about half that of the coloration circuit.

\begin{figure}[t]
    \includegraphics[scale=0.9]{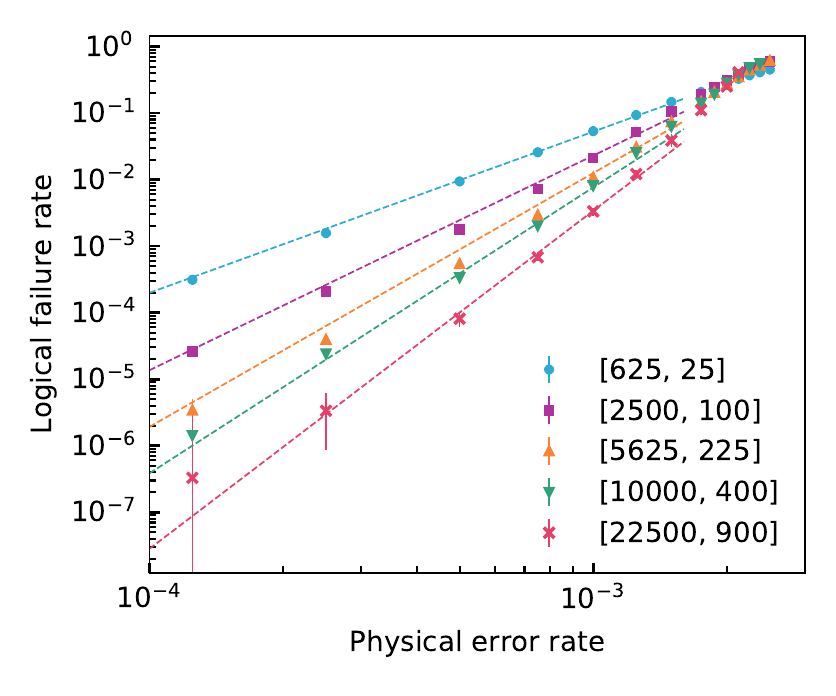}
    \caption{
      The failure rate per round averaged over 10 successive rounds of error correction.
      The dashed lines are obtained using $P_L(p, k) = c_1 \qty(p/p_t)^{c_2 k^{c_3}}$ finding fitting constants $c_1=0.64$, $c_2=1.3$, $c_3=0.21$ and the threshold $p_t = 2.8(2) \times 10^{-3}$.
    }
    \label{fig::results}
\end{figure}

\sectitle{Numerical results}
We use the standard circuit noise model.
We simulate the performance of a family of HGP codes with parameters $[[25s^2, s^2]]$, {\em i.e} encoding $k = s^2$ logical qubits into $n = 25s^2$ physical qubits.
The syndrome extraction is performed with the cardinal circuit using $24s^2$ ancilla qubits.
More details about the noise model and the codes can be found in the Supplementary Material.

To decode and correct faults,
we use the approach set out in Ref.~\cite{grospellier_combining_2020},
which is based on Belief Propagation (BP)~\cite{gallager_low-density_1962, poulin_iterative_2008}
and Small Set Flip (SSF)~\cite{leverrier_quantum_2015}.
After each round of stabilizer extraction, 
BP is iterated until there is a local minimum of the number of violated stabilizer measurements.
To probe the overall performance after $T$ rounds of error correction,
a perfect stabilizer measurement round is applied and BP and SSF are alternated until the SSF decoder converges to a correction.
If either no correction is found, or if the net effect of all noise and corrections is to introduce a logical operator, we say that a failure has occurred by round $T$.

By sampling over many realizations of this procedure with outcomes extracted using the cardinal circuit,
we estimate the logical failure rate of a family of HGP codes over a range of physical failure rates as shown in Fig.~\ref{fig::results}.
We use a simple extrapolation of the data to estimate a threshold of $p_{t} = 2.8(2) \times 10^{-3}$ and compare the qubit overhead with that of the surface code in Table~\ref{tab:data}.
Based on our simulation, we propose the heuristic formula
\begin{align} \label{eq:heuristic_for_pLog}
P_L(p, k) = c_1 \qty(p/p_t)^{c_2 k^{c_3}}
\end{align}
where $c_1=0.64$, $c_2=1.3$, $c_3=0.21$ and $k=24 s^2$ to provide an estimate of the logical failure rate per round with physical error rate $p < p_{t}$.
Further details on the code construction and our numerical approach can be found in the Supplementary Material.

\begin{table} 
    \begin{center}
        \begin{tabular} { l c c c }
            \toprule
           	Logical failure rate & $10^{-9}$ & $10^{-12}$ & $10^{-15}$ \\
            \midrule
            Logical qubits & \num{1600} & \num{6400} & \num{18496} \\
%            Surface code distance  & \num{11} & \num{15}  & \num{19} \\
            Surface code physical qubits & \num{387200} & \num{2880000}  & \num{13354112} \\
            HGP code physical qubits & \num{78400}   & \num{313600}   & \num{906304} \\
            Improvement using HGP codes & \num{4.94}$\times$   & \num{9.18}$\times$  & \num{14.73}$\times$ \\
            \bottomrule
        \end{tabular}
    \end{center}
    \caption{Total number of data and ancilla qubits required to achieve specific logical failure rates with a physical error rate of $10^{-4}$ using surface codes and HGP codes.
	The HGP code data is estimated from Eq.~\ref{eq:heuristic_for_pLog}, while the surface code data is estimated using the formula $P'_L(p, k, d) = a k \qty(p/p'_t)^{(d+1)/2}$ with optimistic values of $p'_t = 0.011$ and $a = 0.03$ from Ref.~\cite{fowler_surface_2012} and Ref.~\cite{wang_surface_2011}.    
    }
    \label{tab:data}
\end{table}

\sectitle{Outlook and hardware challenges}
We have shown that syndrome extraction can be implemented in constant depth using a planar layout for any CSS quantum LDPC code, including HGP codes, but also many other families of interest including hyperbolic codes~\cite{breuckmann2017hyperbolic} 
and homological product codes~\cite{bravyi2014homological}.
Our design simultaneously seeks to minimize the depth of the stabilizer measurement circuit which achieves faster quantum error correction and reduce the time allowed for error buildup, while also avoiding crossings of the connections that couple qubits which is expected to improve fabrication and reduce cross-talk. 
Further improvements could come from optimizing the circuit to minimize the spreading of errors~\cite{breuckmann2021ldpc} using better decoders~\cite{panteleev2019degenerate, roffe2020decoding, delfosse2021toward, quintavalle2021lifting}, or by leveraging improved planar graph algorithms.

We hope that these significant quantum error-correction advantages will motivate experimental teams to overcome the challenges to build quantum hardware in planar layouts.
We foresee two major obstacles.
Firstly, our design requires a number of long range links within each layer. 
Significant experimental progress has been made in that direction using for instance photonic couplings to establish long range connections~\cite{trifunovic_long-distance_2013, monroe2014large, monroe2014large, nickerson2014freely, tosi_silicon_2017, ho_ultrafast_2019, morse2017photonic, bergeron2020silicon, metelmann2018nonreciprocal} but it is unclear which of these approaches could be scaled to larger systems.
Secondly, there is a tension between the need for insulation between the layers to reduce crosstalk and the fact that data qubits must participate in all layers.

\begin{acknowledgments}
\sectitle{Acknowledgments}
The authors would like to thank David Poulin for his encouragements in the early stage of this project, 
Jeongwan Haah for his comments on a preliminary version of this work
and Nouédyn Baspin for insightful discussions.
\end{acknowledgments}

\section{Supplemental Material}

\subsection{Tanner subgraphs for HGP codes}

Here we prove the bound stated in Lemma~\ref{lemma:hgp_degree_bound} on the degree of sub graphs of the Tanner graphs of HGP codes.

\begin{proof}
Edges of the Tanner graph of the form $\{(i, j), (i', j)\}$ 
are called \introduce{horizontal edges} 
and the edges $\{(i, j), (i, j')\}$ are \introduce{vertical edges}.
The lower bound is trivial because the four types of edges form a partition of the edge set of $T$.
Consider the subgraph $T_{\dirh}$ (resp. $T_{\dirv}$) of $T$ induced by horizontal (resp. vertical) edges.
By definition, $T_\dirE$ and $T_\dirW$ are subgraphs of $T_\dirh$ which implies $\deg(T_\dirE), \deg(T_\dirW) \leq \deg(T_\dirh)$.
Similarly, we obtain $\deg(T_\dirS), \deg(T_\dirN) \leq \deg(T_\dirv)$.
Moreover, due the product structure of the graph, we have 
$\deg(T) = \deg(T_\dirh) + \deg(T_\dirv)$, which leads to the upper bound.

Assume that $G_1$ and $G_2$ have only even degree vertices and are equipped with a balanced ordering and let us prove that the lower bound is tight.
Then, we have $\delta_\pm(G_m, i) = \deg_{G_m}(i)/2$.
Let $(i, j)$ be a maximum degree vertex in $T_\dirN$.
If $(i, j)$ is a stabilizer vertex, then by definition of the direction $\dirN$, the degree of $(i, j)$ in the graph $T_\dirN$ is given by $\delta_+(G_2, j)$.
If $(i, j)$ is a qubit vertex, its degree is given by $\delta_-(G_2, j)$.
Therefore, we always have
\begin{align*}
\deg(T_\dirN) 
	& \leq \max( \delta_+(G_2, j), \delta_-(G_2, j) ) \\
	& = \deg_{G_2}(j)/2 \\
	& \leq \deg(G_2) / 2.
\end{align*}
By the same reasoning, we obtain 
$
\deg(T_\dirS) \leq \deg(G_2) / 2
$
and combining both equations, we find 
\begin{align} \label{eq:lemma_proof_bound_G2}
\deg(T_\dirS) + \deg(T_\dirN) \leq \deg(G_2).
\end{align}
Applying the same argument to horizontal edges, we get
\begin{align} \label{eq:lemma_proof_bound_G1}
\deg(T_\dirE) + \deg(T_\dirW) \leq \deg(G_1).
\end{align}
Using Eq.~\eqref{eq:lemma_proof_bound_G2} and \eqref{eq:lemma_proof_bound_G1}, we obtain
\begin{align*}
\sum_{\dirD = \dirN, \dirS, \dirE, \dirW} \deg(T_\dirD)
	& \leq \deg(G_1) + \deg(G_2)
\end{align*}
which concludes the proof of the lemma because $\deg(G_1) + \deg(G_2) = \deg(T)$.
\end{proof}

\subsection{Numerical simulations}

In our numerical studies we consider HGP codes constructed from the product of random (3, 4)-regular Tanner graphs with $4s$ bit nodes and $3s$ check nodes with girth at least 8.
The codes are generated using the procedure of Ref.~\cite{grospellier_combining_2020} and for each $s$, 
we pick the best code from a few hundred samples by decoding i.i.d noise with
the SSF decoder.
The actual number of samples is based on the decoding time and the computing resources.
Depending on the code length, this ranges from 50 to 1000 samples.
For all these graphs,
we found a balanced vertex ordering
using a greedy Metropolis-Hasting inspired algorithm by defining a measure of how close to balanced is a vertex ordering of a graph.

We use a circuit noise model in which each operation, including an identity gate, is faulty with probability $p$.
If a unitary operation is faulty, we apply a random uniform non-trivial Pauli error on the support of the operation.
If a single-qubit measurement is faulty, its outcome is flipped.

In our simulations, we estimate the logical failure rate by averaging over $T$ successive rounds of error correction.
Based on Fig.~\ref{fig::failure_rate_per_round} we can assume that the logical error rate after $T$ rounds behaves roughly like $P_L(T) = c + q T$ for some constants $c$ and $q$ that depend on the noise rate and the code. 
For simplicity, we estimate the logical error rate per round as $P_L = P_L(T) / T$ for $T = 10$ rounds of error correction. 
This value leads to a pessimistic estimate of the logical failure rate after $T$ rounds for any $T \geq 10$. 
This is a reasonable model because we expect these code blocks to be kept alive for many logical cycles during a quantum computation because they encode hundreds of thousands of logical qubits. Indeed, one could conceivably run a full large scale quantum algorithm using only the logical qubits encoded in a single block of the codes simulated in Fig.~\ref{fig::results}.

\begin{figure}[t]
  \includegraphics{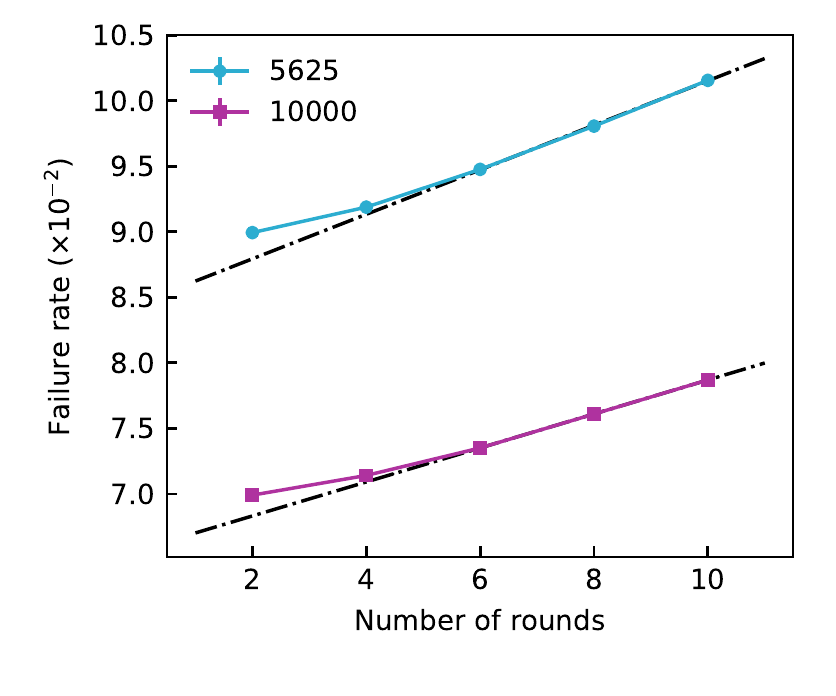}
    \caption{
		Failure rate after an increasing number of error correcting rounds
    for codes of length 5625 and 10000 with $p = 10^{-3}$.
		We observe that after 6 rounds,
		the increment per round is roughly constant.
    }
    \label{fig::failure_rate_per_round}
\end{figure}

%
%\bibliographystyle{apsrev4-2}
%\bibliography{references.bib}
%

%apsrev4-2.bst 2019-01-14 (MD) hand-edited version of apsrev4-1.bst
%Control: key (0)
%Control: author (72) initials jnrlst
%Control: editor formatted (1) identically to author
%Control: production of article title (-1) disabled
%Control: page (0) single
%Control: year (1) truncated
%Control: production of eprint (0) enabled
%

\end{document}